\theoremstyle{plain}
\newtheorem{theorem}{Theorem}[section]
\newtheorem{proposition}[theorem]{Proposition}
\theoremstyle{definition}
\theoremstyle{remark}
\newcommand{\Her}{\operatorname{Her}}
\newcommand{\HH}{\mathcal{H}}
\newcommand{\D}{\mathcal{D}}
\newcommand{\U}{\mathcal{U}}
\renewcommand{\u}{\mathfrak{u}}
\newcommand{\T}{\operatorname{T}\!}
\newcommand{\Ker}{\operatorname{Ker}}
\newcommand{\Ad}{\operatorname{Ad}}
\newcommand{\Tr}{\operatorname{Tr}}
\newcommand{\diag}{\operatorname{diag}}
\renewcommand{\phi}{\varphi}
\newcommand{\1}{{\mathbf 1}}
\newcommand{\0}{{\mathbf 0}}
\newcommand{\obs}[1]{\hat{#1}}
\begin{document}
\title{A geometric framework   for mixed quantum states  based on a  K\"{a}hler structure }

\author{Hoshang Heydari\\\emph{ Department of Physics, Stockholm  University, SE-106 91 Stockholm, Sweden}}

\maketitle

\begin{abstract}
In this paper we introduce a   geometric framework for mixed quantum states based on a K\"ahler structure. The geometric framework includes a symplectic form, an almost complex structure, and a Riemannian metric that characterize the space of mixed quantum states. We  argue that the almost complex structure is integrable.  We also in detail  discuss a visualizing application of this geometric framework by deriving a geometric uncertainty relation for mixed quantum states. The framework is computationally effective  and it provides us with a better understanding of general quantum mechanical systems.
\end{abstract}

\section{Introduction}
Geometric quantum mechanics describes quantum mechanical systems based on their underlying geometrical structures \cite{Gunter_1977,Kibble_1979,Ashtekar_etal1998,Brody_etal1999}. Recently, it has been shown that such geometrical structures of quantum theory have profound information about foundations and the nature of the theory with many applications in quantum science and technology \cite{Zanardi_etal1999,Ekert_etal2000,Solinas_etal2003,Uhlmann1989,Uhlmann1991}.

In geometric quantum mechanics the projective Hilbert space is constructed by general Hopf fibration of hypersphere and usually is called the quantum phase space of a pure quantum state. However, a pure state is a very limited class of quantum states, namely mixed quantum states. We know a lot about the geometry of pure quantum states but our knowledge are  very limited when we consider mixed quantum states.

Recently, we have introduced a geometric framework for density operators based on fiber bundles which has lead to many interesting results such as a geometric phases, an uncertainty relations, quantum speed limits, a distance measure, and an optimal Hamiltonian \cite{GP,MB,DD,GQE,GUR,QSL}. Note the geometric framework that we introduce in this paper is different from the fiber bundles one.

In this paper we introduce a  geometric framework for mixed quantum states based on a specific K\"ahler structure. The mathematical structure are well-known in the mathematical literature,  but is almost unknown to physicists. In  section \ref{s1} we introduce the geometric framework for mixed quantum states. We will in detail discuss  Kirillov-Kostant-Souriau K\"ahler structure and  the existence of an almost complex structure on quantum phase space of mixed states. We will also briefly discuss integrability of the almost complex structure. In section \ref{s2} we will apply our geometric framework to quantum systems by deriving a geometric uncertainty relation for mixed quantum states which is one of the most important topics that distinguish quantum physics from classical physics \cite{GUR}.

\section{Geometric framework}\label{s1}
There are three  important geometries. The most well-known one is called Riemannian geometry which is defined to be the geometry of a positive-definite symmetric bilinear form. The Riemannian geometry is a well-developed subject and we will not further discuss it here in this text. Moreover, the geometry of a closed non-degenerate skew-symmetric bilinear form  is called symplectic geometry. Finally, the geometry  of a linear bundle map with square -1 is called almost complex geometry.
A K\"{a}hler manifold is symplectic manifold which is equipped  with an integrable almost complex structure.

In this section we introduce a new geometric framework for general finite dimensional quantum systems based on  a specific K\"{a}hler structure which is called Kirillov-Kostant-Souriau K\"ahler structure
. In the following text we will denote the identity map by
$\1$ and we let $\1_{n}$ be  the $n\times n$ identity matrix, and $\0_{n}$ is the $n\times n$ zero matrix.
\subsection{The Kirillov-Kostant-Souriau K\"ahler structure}
In the first step we will define Kirillov-Kostant-Souriau K\"ahler structure for the space of density operators.
To do so we let  $\HH$ be an $n$-dimensional Hilbert space, $\U(\HH)$ be the group of unitary operators on $\HH$,
$\Her(\HH)$ be the space of Hermitian operators on $\HH$, and  the adjoint action of $\U(\HH)$ on $\Her(\HH)$
\begin{equation}
\U(\HH)\times \Her(\HH)\longrightarrow \Her(\HH),
\end{equation}
defined by
\begin{equation}\label{conj}
(U,\obs{A})\mapsto \Ad_U(\obs{A})=U\obs{A}U^\dagger.
\end{equation}
The manifold $\Her(\HH)$ is diffeomorphic to the homogeneous space $U(n)/U(n_{1})\times U(n_{2})\times\cdots \times U(n_{k})$. It is easy to show that  $\Her(\HH)$ is a flag manifold.
A density operator on $\HH$ is a member of $\Her(\HH)$ whose eigenvalues are non-negative and sum up to $1$.
We write $\D(\HH)$ for the space of density operators on $\HH$.
Note that, the adjoint action preserves $\D(\HH)$, and the orbits of the action in $\D(\HH)$ are in one-to-one correspondence with the possible spectra for density operators on $\HH$.
To be precise, two density operators belong to the same orbit if and only if they have the same spectrum.
Given such a spectrum $\sigma$, we write $\D(\sigma)$ for the corresponding orbit.
In this section we introduce an $\Ad$-equivariant K\"ahler structure on $\D(\sigma)$ called the Kirillov-Kostant-Souriau K\"ahler structure \cite{KKS}.
We remind the reader that a K\"ahler structure is pair $(\omega,J)$ consisting of a symplectic structure $\omega$ and a complex structure $J$, and that associated to such a structure is a Hermitian inner product,
\begin{equation}
h(X,Y)=\omega(X,JY)+i\omega(X,Y).
\end{equation}
Note also that $(X,Y)\longmapsto\omega(X,JY)$ is a Riemannain metric on $\D(\HH)$.

\subsection{Representation of tangent vectors}

Next we want to define  representations of  tangent vectors on the orbit of the adjoint action.
Note  that the adjoint action \eqref{conj} is transitive, that is, for each density operator $\rho$ we have a surjective linear map $\Lambda_\rho:\Her(\HH)\to\T_\rho\D(\sigma)$ defined by
\begin{equation}
\Lambda_\rho(\obs{H})=\frac{1}{i\hbar}[\hat{H},\rho]
\end{equation}
since any elements in $\T_\rho\D(\sigma)$ can be written as $[\hat{H},\rho]$.
Note that, since the map $(X,Y)\longmapsto \mathrm{Tr}(XY)$ define a bilinear form on $\mathfrak{u}(n)$ which is non-degenerated and invariant under conjugation, the kernel of $\Lambda_\rho(\obs{H})$ is a subspace of $\mathfrak{u}(n)$ which is the Lie algebra of the stabilizer  of $\rho$ for the group action $U(n)$. We can also identify the Lie algebra $\mathfrak{u}(n)$ with its dual $\mathfrak{u}^{*}(n)$ which implies that the $U(n)$ action on $\mathfrak{u}(n)$ or $H$ is adjoint or co-adjoint action. Thus $\Her(\HH)$ cán be described by co-adjoint of  $\mathfrak{u}(n)$.
The kernel of $\Lambda_\rho$ consists of all Hermitian operators on $\HH$ that commutes with $\rho$,
and we define a complementary space to $\Ker\Lambda_\rho$ as follows.

Let $p_1>p_2>\dots>p_k$ be the different eigenvalues in the spectrum of the density operator, $\sigma$, and $n_j$ be the multiplicity of $p_j$.
We can always find a basis in $\HH$ relative which
\begin{equation}
\rho=\diag(p_1\1_{n_1},p_2\1_{n_2},\dots,p_k\1_{n_k}).
\end{equation}
Moreover, the kernel of $\Lambda_{\rho}$ consists of all those Hermitian operators $\obs{A}$ which are represented by block diagonal matrices
\begin{equation}
\obs{A}=\diag(A_{11},A_{22},\dots, A_{kk}),
\end{equation}
relative to this basis where each $A_{jj}$ is an $n_j\times n_j$ Hermitian matrix. We define the complementary space $\Ker\Lambda_{\rho}^\bot$ to consist of all the Hermitian operators that are represented by off-diagonal matrices
\begin{equation}\label{off}
\obs{B}=\left[
  \begin{array}{ccccc}
  \0_{n_1}        & B_{12}         & B_{13}  & \ldots & B_{1k}  \\
  B_{12}^\dagger & \0_{n_2}        & B_{23}  & \ldots & B_{2k}  \\
  B_{13}^\dagger & B_{23}^\dagger & \0_{n_3} & \ldots & B_{3k}  \\
  \vdots         & \vdots         & \vdots  & \ddots & \vdots  \\
  B_{1k}^\dagger & B_{2k}^\dagger & B_{3k}^\dagger   & \ldots  & \0_{n_k}
  \end{array}
\right].
\end{equation}
Obviously, $\Her(\HH)=\Ker\Lambda_{\rho}\oplus \Ker\Lambda_{\rho}^\bot$, and $\Lambda_\rho$ maps $\Ker\Lambda_{\rho}^\bot$ isomorphically onto $\T_\rho\D(\sigma)$. Now we are in right position to define an almost complex structure on quantum phase space.

\subsection{Almost complex structure}
An almost complex structure on a manifold is an automorphism of its tangent bundle whose square equals $-\1$.
Moreover, the almost complex structure is  a complex structure if it is integrable, meaning that a rank two tensor, usually called the Nijenhuis tensor vanishes. We will discuss integrability of almost complex structure in the following text.
Note also that manifolds that admit complex structures can be equipped with holomorphic atlases.
That is, they are complex manifolds.

The orbit $\D(\sigma)$ does admit an Ad-invariant complex structure $J$;
we define an operator $\obs{B}\mapsto\check{B}$ on $\ker\Lambda_\rho$, where, if $\obs{B}$ is given by \eqref{off}, the operator $\check{B}$
is given by
\begin{equation}
\check{B}=\left[
  \begin{array}{ccccc}
  \0_{n_1}        & iB_{12}         & iB_{13}  & \ldots & iB_{1k}  \\
  -iB_{12}^\dagger & \0_{n_2}        & iB_{23}  & \ldots & iB_{2k}  \\
  -iB_{13}^\dagger & -iB_{23}^\dagger & \0_{n_3} & \ldots & iB_{3k}  \\
  \vdots         & \vdots         & \vdots  & \ddots & \vdots  \\
  -iB_{1k}^\dagger & -iB_{2k}^\dagger & -iB_{3k}^\dagger   & \ldots  & \0_{n_k}
  \end{array}
\right].
\end{equation}
Now, the bundle map $J:\T\D(\sigma)\to\T\D(\sigma)$, defined by
\begin{equation}
J\left(\frac{1}{i\hbar}[\obs{B},\rho]\right)=\frac{1}{i\hbar}[\check{B},\rho],
\end{equation}
where  $J\left(\frac{1}{i\hbar}[\obs{B},\rho]\right)= \frac{1}{i\hbar}[j(\obs{B}),\rho]$, e.g.,  for the matrix $\hat{B}=(B_{kl})$ we have $j(\hat{B})= (i B_{kl})$. Note that $J$ satisfies $J^2=-\1$, as follows
\begin{equation}
J\left(J\left(\frac{1}{i\hbar}[\obs{B},\rho]\right)\right)=J\left(\frac{1}{i\hbar}
[j(\hat{B}),\rho]\right)=\frac{1}{i\hbar}
[j(j(\hat{B})),\rho]=
\frac{1}{i\hbar}[-\obs{B},\rho]\Longrightarrow J^{2}=-1,
\end{equation}
and thus is an almost complex structure.
Next we show that $J$ is integrable, and hence is a complex structure.

\subsection{Integrability of $J$ on quantum phase space}
We have derived an almost complex structure  for the quantum phase space of mixed states. One important  question concerning this almost complex structure is the integrability of $J$ which we will investigate in this section. An intergrade almost structure has the structure
of a complex analytic manifold.  Let $J$ be an almost complex structure on our quantum phase space. A condition for integrability of $J$ is the following. We can associate a $(2,1)$-tensor $N^{J}$ defined by
\begin{equation}
N^{J}(X,Y)=[X,Y]+J[JX,Y]+J[X,JY]-[JX,JY],
\end{equation}
for all $X,Y\in T\mathcal{D}(\sigma)$ is a local vector fields, to every almost complex structure $J$.  $N^{J}(X,Y)$ is  called Nijenhuis tensor. Then we have the following proposition:
\begin{proposition}
Let $J$ be an almost complex structure on our quantum phase space $\mathcal{D}(\sigma)$. Then these two statements are equivalent
\begin{enumerate}
  \item $J$ be an almost complex structure
  \item $N^{J}=0$.
\end{enumerate}
\end{proposition}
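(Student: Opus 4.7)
The statement as written appears to contain a typo, since condition (1) merely restates the standing hypothesis. The natural intended claim, and the one the surrounding text motivates, is the classical Newlander-Nirenberg theorem: an almost complex structure $J$ is \emph{integrable} (equivalently, arises from a genuine complex manifold atlas) if and only if $N^J=0$. My plan addresses this equivalence on $\D(\sigma)$.

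The easy direction, integrability $\Rightarrow N^J=0$, is local and purely computational. If $J$ is integrable, then around each point there exist holomorphic coordinates $(z_1,\dots,z_m)$ with $z_k=x_k+iy_k$, and $J$ sends $\partial_{x_k}\mapsto\partial_{y_k}$, $\partial_{y_k}\mapsto-\partial_{x_k}$. Since coordinate vector fields Lie-commute, each of the four bracket terms in $N^J$ evaluated on pairs of coordinate fields vanishes identically; tensoriality of $N^J$ in both slots then forces $N^J\equiv 0$ globally.

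The hard direction $N^J=0\Rightarrow$ integrability is the substantive content. The standard conceptual route is to complexify: extend $J$ $\CC$-linearly to $TM\otimes\CC$ with $M=\D(\sigma)$ and split into $\pm i$-eigenbundles $T^{1,0}\oplus T^{0,1}$. A direct algebraic check shows that $N^J=0$ is equivalent to involutivity of $T^{0,1}$ under the complex-bilinear extension of the Lie bracket. One then invokes the complex Frobenius theorem: an involutive $\CC$-subbundle of $TM\otimes\CC$ of complex rank equal to the complex dimension and transverse to its conjugate admits, locally, enough independent solutions of $\bar\partial_J z_k=0$ to serve as holomorphic charts. In the smooth category this existence step requires nontrivial elliptic PDE (the original Newlander-Nirenberg argument); in the real-analytic case one may substitute Cauchy-Kowalevski in the style of Eckmann-Fr\"olicher. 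This PDE step is the main obstacle in full generality.

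For the specific manifold $\D(\sigma)$ the PDE step can be circumvented, since $\D(\sigma)\cong U(n)/(U(n_1)\times\cdots\times U(n_k))$ is a generalized flag manifold and the $\Ad$-invariant $J$ built from multiplication by $i$ on off-diagonal blocks is precisely the complex structure induced by the Borel embedding $\D(\sigma)\hookrightarrow GL(n,\CC)/P$ as a complex projective variety. Writing down holomorphic charts via the Bruhat cells of $GL(n,\CC)/P$ exhibits integrability directly, and the easy direction then yields $N^J=0$. Thus on $\D(\sigma)$ one can complete the equivalence without invoking the full strength of Newlander-Nirenberg, at the cost of importing structure theory of complex homogeneous spaces.
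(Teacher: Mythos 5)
The paper does not actually contain a proof of this proposition: it states it and immediately refers the reader to \cite{Dasilva,Mein,Gompf}, so there is no in-paper argument to compare against, and your proposal does strictly more than the source. You are right that the statement has a typo and that condition (1) must be read as ``$J$ is integrable'' (i.e.\ $J$ is a genuine complex structure) for the equivalence to have content. Your outline is the standard one and is correct as a sketch: the easy direction is the coordinate computation using tensoriality of $N^J$ and the vanishing of brackets of coordinate fields, and the converse is Newlander--Nirenberg, which you correctly reduce to involutivity of $T^{0,1}$ plus the complex Frobenius theorem and honestly flag as the nontrivial analytic step rather than pretending to prove it. The most valuable part of your answer is the shortcut specific to $\D(\sigma)$: since the orbit is the generalized flag manifold $U(n)/(U(n_1)\times\cdots\times U(n_k))\cong GL(n,\CC)/P$ and the $\Ad$-invariant $J$ obtained by multiplying the upper off-diagonal blocks by $i$ is the complex structure induced by this identification, holomorphic charts come for free from the Bruhat cells, integrability is immediate, and $N^J=0$ follows from the easy direction. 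This avoids the PDE entirely and is arguably the argument the paper should have supplied, since it is what actually justifies the later claim that $\D(\sigma)$ is K\"ahler. The one step you would need to fill in for a complete write-up is the verification that the block-multiplication $J$ coincides with the complex structure coming from $GL(n,\CC)/P$ (a root-space bookkeeping exercise identifying the off-diagonal blocks $B_{ij}$, $i<j$, with the nilradical of $\mathfrak{p}$); as stated it is asserted rather than checked.
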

For the  proof and more information see \cite{Dasilva,Mein,Gompf}. In the next section we define the most important structure of the geometric framework.

\subsection{K\"ahler structure}
In this section we define Kirillov-Kostant-Souriau symplectic form and derive an explicit expression for Hermitian inner product on the quantum phase space $\D(\sigma)$.
The Kirillov-Kostant-Souriau symplectic form on $\D(\sigma)$ is defined by
\begin{equation}\label{KKSsymp}
\omega\left(\frac{1}{i\hbar}[\obs{A},\rho],\frac{1}{i\hbar}[\obs{B},\rho]\right)
=\frac{1}{i\hbar}\Tr\left([\hat{A},\hat{B}]\rho\right)
=\frac{1}{i\hbar}\Tr\left(\hat{A}[\hat{B},\rho]\right).
\end{equation}
\begin{theorem}
The symplectic form $\omega$ (\ref{KKSsymp}) is non-degenerated  and closed.
\end{theorem}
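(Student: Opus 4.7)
My plan is to prove the two claims separately, in both cases reducing the statement to concrete identities in the algebra $\Her(\HH)$.

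For non-degeneracy, given a nonzero $X\in\T_\rho\D(\sigma)$, I would use the decomposition $\Her(\HH)=\Ker\Lambda_\rho\oplus\Ker\Lambda_\rho^\bot$ to pick the unique representative $\obs{A}\in\Ker\Lambda_\rho^\bot$ with $X=\Lambda_\rho(\obs{A})$ and then test $\omega$ against $JX=\frac{1}{i\hbar}[j(\obs{A}),\rho]$. In the eigenbasis where $\rho=\diag(p_1\1_{n_1},\dots,p_k\1_{n_k})$, the off-diagonal block form of $\obs{A}$ together with the prescription for $j(\obs{A})$ forces the diagonal blocks of $[\obs{A},j(\obs{A})]$ to collapse to combinations of $A_{ij}^\dagger A_{ij}$, and taking the trace against $\rho$ should give
\begin{equation}
\omega(X,JX)=\frac{2}{\hbar}\sum_{i<j}(p_j-p_i)\Tr(A_{ij}^\dagger A_{ij}).
\end{equation}
Since the eigenvalues $p_1>\dots>p_k$ are strictly distinct and at least one block $A_{ij}$ must be nonzero (otherwise $X=0$), this expression has a definite sign and is nonzero, which yields non-degeneracy.

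For closedness, I would exploit that the adjoint action is transitive, so every tangent vector at $\rho$ extends to a globally defined fundamental vector field $X_A(\rho')=\frac{1}{i\hbar}[\obs{A},\rho']$, and it suffices to check
\begin{equation}
d\omega(X_A,X_B,X_C)=\sum_{\text{cyc}}X_A\,\omega(X_B,X_C)-\sum_{\text{cyc}}\omega([X_A,X_B],X_C)=0
\end{equation}
on such fields. Differentiating $\omega(X_B,X_C)(\rho')=\frac{1}{i\hbar}\Tr([\obs{B},\obs{C}]\rho')$ along the conjugation flow generated by $X_A$ gives $X_A\,\omega(X_B,X_C)=\frac{1}{(i\hbar)^2}\Tr([\obs{A},[\obs{B},\obs{C}]]\rho)$ (up to the sign fixed by the $\Ad$-action convention), while the fundamental-field bracket identity $[X_A,X_B]=-X_{[A,B]/(i\hbar)}$ rewrites the second sum in the same form. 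Combining the six terms leaves $\frac{1}{(i\hbar)^2}\Tr\bigl(\bigl(\sum_{\text{cyc}}[\obs{A},[\obs{B},\obs{C}]]\bigr)\rho\bigr)$, which vanishes by the Jacobi identity in $\Her(\HH)$.

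The hard part is sign bookkeeping, in two places: the non-degeneracy computation requires using the hermiticity $A_{ji}=A_{ij}^\dagger$ and the above/below diagonal labeling correctly so that the sum really is of definite sign, and the closedness computation requires the signs in $[X_A,X_B]$ and in the flow derivative of $\omega(X_B,X_C)$ to combine so that the Jacobi identity actually annihilates the cyclic sum. One can sidestep the second issue by invoking the general fact that the Kirillov--Kostant--Souriau form on any coadjoint orbit is symplectic, once $\Her(\HH)$ is identified with $\u^*(n)$ via the invariant pairing $\Tr$ and $\D(\sigma)$ is recognised as the coadjoint orbit through $\rho$, a viewpoint already used in the excerpt.
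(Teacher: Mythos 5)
Your proposal is correct, but the two halves compare differently with the paper. For closedness you follow essentially the paper's route: evaluate $d\omega$ on the fundamental vector fields $X_A(\rho')=\frac{1}{i\hbar}[\obs{A},\rho']$ via the standard formula and kill the terms with the Jacobi identity; you are in fact more careful than the paper, which asserts that the three derivative terms vanish ``by invariance'' --- individually they do not (one gets $X_A\cdot\omega(X_B,X_C)=\frac{1}{(i\hbar)^2}\Tr([[\obs{B},\obs{C}],\obs{A}]\rho)$), but as you note their cyclic sum, like that of the bracket terms, vanishes by Jacobi, so your bookkeeping closes the small gap in the paper's wording. For non-degeneracy your route is genuinely different: the paper tests the form against the tangent vector itself, choosing $\obs{A}=[\obs{B},\rho]$ so that $\Tr(\obs{A}[\obs{B},\rho])=\Tr([\obs{B},\rho]^2)\neq 0$ (strictly, $[\obs{B},\rho]$ is anti-Hermitian, so one should take $i[\obs{B},\rho]$ as the test observable; the trace is then strictly negative/positive definite in the tangent vector), whereas you pair $X$ with $JX$ using the unique off-diagonal representative and obtain $\omega(X,JX)=\pm\frac{2}{\hbar}\sum_{i<j}(p_i-p_j)\Tr(A_{ij}^\dagger A_{ij})$, which is of definite sign because the $p_j$ are strictly decreasing. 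Your version is tied to the K\"ahler structure (it is exactly the positivity of the metric $\omega(\cdot,J\cdot)$, anticipating the paper's Hermitian-product theorem, whose conventions fix your sign to $(p_i-p_j)$ with $i<j$), at the cost of invoking $J$ and the block decomposition; the paper's version is shorter and needs only the trace form, at the cost of the Hermiticity slip noted above. Your fallback of quoting the general coadjoint-orbit KKS theorem after identifying $\Her(\HH)\cong\u^*(n)$ is also legitimate and consistent with the paper's own framing.
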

\begin{proof}
The symplectic form $\omega$ is non-degenerated since if we chose $\hat{A}=[\hat{B},\rho]$ in equation (\ref{KKSsymp}) then $\Tr\left(\hat{A}[\hat{B},\rho]\right)\neq0$ which implies that $\omega\left(\frac{1}{i\hbar}[\obs{A},\rho],\frac{1}{i\hbar}[\obs{B},\rho]\right)\neq0$.
Next we will prove that the symplectic form $\omega\left(\frac{1}{i\hbar}[\obs{A},\rho],\frac{1}{i\hbar}[\obs{B},\rho]\right)$ is closed, that is
\begin{equation}
d\omega\left(\frac{1}{i\hbar}[\obs{A},\rho],\frac{1}{i\hbar}[\obs{B},\rho],
\frac{1}{i\hbar}[\obs{C},\rho]\right)=0,
\end{equation}
for all $\hat{A},\hat{B},\hat{C}\in \mathfrak{u}(n)$  as follows. Let $\overline{\hat{A}}$, $\overline{\hat{B}}$ , and $\overline{\hat{C}}$  be the fundamental vector fields representing $\frac{1}{i\hbar}[\obs{A},\rho]$, $\frac{1}{i\hbar}[\obs{B},\rho]$, and $\frac{1}{i\hbar}[\obs{C},\rho]$ respectively. Then we have
\begin{eqnarray}
 \nonumber
 d\omega(\overline{\hat{A}},\overline{\hat{B}},
\overline{\hat{C}}) &=& \frac{1}{3} (\overline{\hat{A}}\cdot \omega(\overline{\hat{B}},\overline{\hat{Z}})- \overline{\hat{B}}\cdot \omega(\overline{\hat{A}},\overline{\hat{C}})+
\overline{\hat{C}}\cdot \omega(\overline{\hat{A}},\overline{\hat{B}})\\\nonumber &+&
\omega([\overline{\hat{A}},\overline{\hat{B}}],\overline{\hat{C}})+
\omega([\overline{\hat{B}},\overline{\hat{C}}],\overline{\hat{A}})+
\omega([\overline{\hat{C}},\overline{\hat{A}}],\overline{\hat{B}}))=0
\end{eqnarray}
since the last three terms vanish by the Jacobi identity and the first three terms also vanish by invariance of the symplectic form $\omega$.
\end{proof}
The importance of this form stems from the fact that
if $A$ is the expectation value function of a Hermitian operator $\obs{A}$, that is $A(\rho)=\Tr(\rho \obs{A})$, and $X_A$ is the Hamiltonian vector field associated with $A$, which is implicitly defined by the identity $dA(X)=\omega(X_A,X)$, then
\begin{equation}
X_A(\rho)=\frac{1}{i\hbar}[\hat{A},\rho].
\end{equation}
Now, $(\omega,J)$ is a K\"ahler structure, and we define $h$ to be the associated Hermitian inner product,
\begin{equation}
h(X,Y)=\omega(X,JY)+i\omega(X,Y)
\end{equation}
\begin{theorem}
Let  $\obs{A}$ and $\obs{B}$ be two observables on the Hilbert space
which are off-diagonal at $\rho$. Then we have
\begin{equation*}
h(X_A(\rho),X_B(\rho))=\frac{2}{\hbar}\sum_{i>j} (p_i-p_j)\Tr(A_{ij}^\dagger B_{ij}),
\end{equation*}
where $A_{ij}$ and $ B_{ij}$ are elements of $\obs{A}$ and $\obs{B}$  respectively.
\end{theorem}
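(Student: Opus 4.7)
The plan is to unpack $h$ via its definition $h(X,Y) = \omega(X,JY) + i\omega(X,Y)$ and reduce each of the two terms to a trace through the Kirillov-Kostant-Souriau formula \eqref{KKSsymp}. Since $X_A(\rho) = \tfrac{1}{i\hbar}[\hat{A},\rho]$ and $X_B(\rho) = \tfrac{1}{i\hbar}[\hat{B},\rho]$, that formula immediately gives $\omega(X_A,X_B) = \tfrac{1}{i\hbar}\Tr(\hat{A}[\hat{B},\rho])$. For the other piece I would write $JX_B = \tfrac{1}{i\hbar}[j(\hat{B}),\rho]$ and apply the same formula to obtain $\omega(X_A,JX_B) = \tfrac{1}{i\hbar}\Tr(\hat{A}[j(\hat{B}),\rho])$. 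The whole computation therefore reduces to evaluating these two traces.

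Both traces become elementary in the eigenbasis of $\rho$ introduced earlier, where $\rho = \diag(p_1\1_{n_1},\ldots,p_k\1_{n_k})$. Block-multiplying against this diagonal $\rho$ makes $[\hat{B},\rho]$ strictly block-off-diagonal with $(i,j)$-block equal to $(p_j - p_i)B_{ij}$. By the definition of $j$, which multiplies strict upper blocks by $i$ and strict lower blocks by $-i$, the $(i,j)$-block of $[j(\hat{B}),\rho]$ is $\pm i\,(p_j - p_i)B_{ij}$ with the sign determined by whether $i<j$ or $i>j$. Taking traces against the off-diagonal $\hat{A}$ block by block then produces double sums of the shape $\sum_{i\neq j}(p_i - p_j)\Tr(A_{ij}B_{ji})$, with the $J$-case carrying an additional $\pm i$. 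I would then invoke the Hermiticity relations $A_{ji} = A_{ij}^\dagger$ and $B_{ji} = B_{ij}^\dagger$ to pair the $i<j$ contributions with the $i>j$ contributions, folding each sum into a single sum over $\{i>j\}$.

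When the two pieces are assembled as $h = \omega(X_A,JX_B) + i\omega(X_A,X_B)$, the explicit factor of $i$ in front of the second piece combines with the sign contributed by $j$ so that the two resulting traces interfere constructively on one combination of $\Tr(A_{ij}B_{ij}^\dagger)$ and $\Tr(A_{ij}^\dagger B_{ij})$ and destructively on the other. What survives is $2/\hbar$ times the single sum $\sum_{i>j}(p_i - p_j)\Tr(A_{ij}^\dagger B_{ij})$, which is the claimed formula. The overall strategy is short; the real obstacle is purely bookkeeping, namely keeping the compounded signs from the $1/(i\hbar)$ prefactors, the antisymmetry $(p_i - p_j) = -(p_j - p_i)$, the orientation sign $\mathrm{sgn}(j-i)$ introduced by $j$, and the Hermiticity swaps all aligned so that exactly the right real-valued combination of traces survives.
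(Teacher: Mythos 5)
Your proposal is correct and follows essentially the same route as the paper: expand $h$ into $\omega(X_A,JX_B)+i\,\omega(X_A,X_B)$, use the Kirillov--Kostant--Souriau trace formula, and do the block bookkeeping in the eigenbasis of $\rho$, folding upper and lower blocks via Hermiticity. The only (cosmetic) difference is that the paper first combines the two terms into the single trace $\tfrac{1}{\hbar}\Tr\bigl([\obs{A},\obs{B}-i\check{B}]\rho\bigr)$ and uses that $\obs{B}-i\check{B}$ is twice the strictly upper block-triangular part of $\obs{B}$, whereas you evaluate the two traces separately and add them at the end; the residual sign/index discrepancy ($i<j$ versus $i>j$) is already present between the paper's theorem statement and its own proof, so it should not be held against you.
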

\begin{proof}
To prove this theorem we note that $h(X_A(\rho),X_B(\rho))$ can be written as
\begin{equation*}
\begin{split}
h(X_A(\rho),X_B(\rho))
&=\frac{1}{i\hbar}\Tr\left([\obs{A},\check{B}]\rho\right)+\frac{1}{\hbar}\Tr\left([\obs{A},\obs{B}]\rho\right)\\
&=\frac{1}{\hbar}\Tr\left([\obs{A},(\obs{B}-i\check{B})]\rho\right)
\end{split}
\end{equation*}
Now, $\obs{B}-i\check{B}$ is represented by the upper diagonal matrix
\begin{equation*}
\obs{B}-i\check{B}=2\left[
  \begin{array}{ccccc}
  \0_{n_1}        & B_{12}         & B_{13}  & \ldots & B_{1k}  \\
                  & \0_{n_2}        & B_{23}  & \ldots & B_{2k}  \\
                  &                 & \0_{n_3} & \ldots & B_{3k}  \\
                  &                &           & \ddots & \vdots  \\
                  &                &                  &    & \0_{n_k}
  \end{array}
\right]
\end{equation*}
which after some straightforward calculation give the following expression for the commutation between  $\obs{A}$ and $\obs{B}-i\check{B})$
\begin{equation*}
[\obs{A},\obs{B}-i\check{B})]=2{\small\left[
  \begin{array}{ccccc}
  -\sum\limits_{j>1}B_{1j}A_{1j}^\dagger   & *         & *  & \ldots & *  \\
  *                         & A_{12}^\dagger B_{12}-\sum\limits_{j>2}B_{2j}A_{2j}^\dagger         & *  & \ldots & *  \\
*                  &         *        &  \sum\limits_{j<3}A_{j3}^\dagger B_{j3}-\sum\limits_{j>3}B_{3j}A_{3j}^\dagger    & \ldots & *  \\
  \vdots               &        \vdots   &     \vdots       & \ddots & \vdots  \\
  *                &           *     &         *         &    & \sum\limits_{j<k}A_{jk}^\dagger B_{jk}
  \end{array}
\right]}
\end{equation*}
Note that the stars represent expressions whose explicit forms need not be known. Thus we have
\begin{equation*}
h(X_A(\rho),X_B(\rho))=\frac{2}{\hbar}\sum_{i<j} (p_i-p_j)\Tr(A_{ij}^\dagger B_{ij}).
\end{equation*}
This end up the prove of our theorem.
\end{proof}
The above result is very important in proof of a geometric uncertainty relation for mixed quantum states which we will consider in the following section.

\section{Geometric uncertainty relation based on K\"{a}hler structure}\label{s2}
In this section we derive a geometric uncertainty relation for mixed quantum states based on the geometric frame work we have introduced in the pervious section.
\\
Let $\obs{A}$ be a observable on $\HH$, and  consider the  uncertainty function
\begin{equation}\label{uf}
\Delta A(\rho)=\sqrt{\Tr(\rho\obs{A}^2)-\Tr(\rho\obs{A})^2}.
\end{equation}
Now we will state the main result of this section in form of the following theorem.
\begin{theorem}
Let $\obs{A}$ and $\obs{B}$ be two obervables  on $\HH$.
Then we have
\begin{equation}\label{gur}
\Delta A\Delta B\geq \frac{\hbar}{2}|h(X_A,X_B)|.
\end{equation}
\end{theorem}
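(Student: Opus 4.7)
The plan is to expand both sides in the eigenbasis of $\rho$ and then chain three elementary estimates. First I would fix an orthonormal basis $\{|\alpha\rangle\}$ of $\HH$ diagonalising $\rho$, with $\rho|\alpha\rangle=p_\alpha|\alpha\rangle$, indexing the basis so that $\alpha<\beta$ implies $p_\alpha\geq p_\beta$. Unfolding the block formula of the previous theorem into scalar matrix entries, the right-hand side of the target inequality becomes
\begin{equation*}
\tfrac{\hbar}{2}\,h(X_A,X_B) \;=\; \sum_{\alpha<\beta}(p_\alpha-p_\beta)\,\overline{A_{\alpha\beta}}\,B_{\alpha\beta},
\end{equation*}
so the weights $p_\alpha-p_\beta$ are non-negative and automatically kill contributions from entries lying inside a single eigenspace of $\rho$ (which is where the dependence on whether $A,B$ are off-diagonal at $\rho$ disappears).

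Next I would apply, in order, the triangle inequality for sums of complex numbers, the pointwise bound $|p_\alpha-p_\beta|\leq p_\alpha+p_\beta$ (valid because $p_\alpha,p_\beta\geq 0$), and the Cauchy--Schwarz inequality for sums weighted by $p_\alpha+p_\beta$. That chain produces
\begin{equation*}
\tfrac{\hbar}{2}|h(X_A,X_B)| \;\leq\; \Big(\sum_{\alpha<\beta}(p_\alpha+p_\beta)|A_{\alpha\beta}|^2\Big)^{1/2}\Big(\sum_{\alpha<\beta}(p_\alpha+p_\beta)|B_{\alpha\beta}|^2\Big)^{1/2}.
\end{equation*}

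The last step is to identify each factor on the right as controlled by the corresponding uncertainty. A direct expansion of $(\Delta A)^{2}=\Tr(A^2\rho)-\Tr(A\rho)^2$ in the eigenbasis separates into a non-negative classical variance of the diagonal entries $\{A_{\alpha\alpha}\}$ under the distribution $\{p_\alpha\}$ plus the off-diagonal contribution $\sum_{\alpha<\beta}(p_\alpha+p_\beta)|A_{\alpha\beta}|^2$. Discarding the classical piece yields $\sum_{\alpha<\beta}(p_\alpha+p_\beta)|A_{\alpha\beta}|^2\leq(\Delta A)^2$, and the analogous inequality for $B$ closes the estimate to give $\Delta A\,\Delta B\geq\tfrac{\hbar}{2}|h(X_A,X_B)|$.

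I expect the only real obstacle to be bookkeeping: converting the block-matrix identity of the previous theorem into the scalar-entry form above, and keeping track of Hermitian conjugation so that the symmetry $A_{\beta\alpha}=\overline{A_{\alpha\beta}}$ is used consistently throughout. Once that translation is in hand the three inequalities are completely routine, no additional input about the K\"ahler or complex structure is required beyond the already-established formula for $h(X_A,X_B)$, and the theorem follows immediately.
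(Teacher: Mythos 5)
Your proposal is correct and follows essentially the same route as the paper: expand in an eigenbasis of $\rho$, split $(\Delta A)^2$ into a non-negative classical variance of the diagonal part plus the off-diagonal term $\sum_{\alpha<\beta}(p_\alpha+p_\beta)|A_{\alpha\beta}|^2$, use $|p_\alpha-p_\beta|\le p_\alpha+p_\beta$, and close with a Cauchy--Schwarz step. The only cosmetic difference is the order of operations: you apply a weighted scalar Cauchy--Schwarz after replacing $|p_\alpha-p_\beta|$ by $p_\alpha+p_\beta$, whereas the paper first bounds $(\Delta A)^2\ge\tfrac{\hbar}{2}h(X_A(\rho),X_A(\rho))$ (and likewise for $B$) and then invokes the Cauchy--Schwarz inequality for the Hermitian form $h$ itself.
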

\begin{proof}
To prove the theorem we first pick a $\rho$ and fix a basis, so that $\rho=\mathrm{diag}(\rho_{1},\rho_{2},\ldots,\rho_{k})$. Then  the observable $\obs{A}$ has the following representation
\begin{equation}
\obs{A}=\left[
  \begin{array}{cccc}
    A_{11} & X_{12} & \cdots & X_{1k} \\
    X^{\dagger}_{12} & A_{22} &\cdots & X_{2k} \\
    \vdots & \vdots & \ddots & \vdots \\
    X^{\dagger}_{1k}  & X^{\dagger}_{2k}  & \cdots & A_{kk}  \\
  \end{array}
\right].
\end{equation}
Then it is not difficult to derive the following expression for our density operator and observable
\begin{eqnarray}
  \Tr(\rho\obs{A}^2)&=&\sum_{i=1}^k p_i\Tr(A_{ii}^2)+\sum_{i<j}(p_i+p_j)\Tr(X_{ij}^\dagger X_{ij}),\\
  \Tr(\rho\obs{A})&=&\sum_{i=1}^k p_i\Tr(A_{ii}).
\end{eqnarray}
Now by inserting these relations into  the equation (\ref{uf}) we get
\begin{equation}
  \begin{split}
    \Delta A(\rho)^2
    &=\sum_{i=1}^k p_i\Tr(A_{ii}^2) - \left(\sum_{i=1}^k p_i\Tr(A_{ii})\right)^2+\sum_{i<j}(p_i+p_j)\Tr(X_{ij}^\dagger X_{ij})\\
    &=(\Delta \obs{A}^\bot)^2 + \sum_{i<j}(p_i+p_j)\Tr(X_{ij}^\dagger X_{ij})\\
    &\geq \sum_{i<j}(p_i-p_j)\Tr(X_{ij}^\dagger X_{ij})\\
    &=\frac{\hbar}{2}h(X_A(\rho),X_A(\rho)),
  \end{split}
\end{equation}
where we have decomposed $\obs{A}$ as  $\obs{A}=\obs{A}^{\|}+\obs{A}^{\bot}$ and $\Delta \obs{A}^\bot=\mathrm{Tr}(\obs{A}^{\bot}\rho)$. Similarly we get $\Delta B(\rho)^2\geq\frac{\hbar}{2}h(X_B(\rho),X_B(\rho))$.
Thus,
\begin{equation}
\begin{split}
\Delta A(\rho)^2\Delta B(\rho)^2
&\geq \frac{\hbar^4}{4}h(X_A(\rho),X_A(\rho))h(X_B(\rho),X_B(\rho))\\
&\geq \frac{\hbar^4}{4}|h(X_A(\rho),X_B(\rho))|^{2},
\end{split}
\end{equation}
where in the last step we have used the Schwarz inequality. By taking the square root of both sides of this equation we get(\ref{gur}).
This end the proof of  our geometric uncertainty relation for mixed quantum states.
\end{proof}
Our geometric uncertainty relation are related to Robertson-Schr\"{o}dinger  uncertainty relation \cite{Robertson_1929}.

\section{Conclusion}
In this paper we have introduced a geometric framework for mixed quantum states based on a K\"{a}hler structure. We have explicitly defined the compatible triplet for our quantum phase space, namely a symplectic form, a Riemannian metric, and an almost complex structure. We have argued that our almost complex structure is integrable since the Nijenhuis tensor vanishes which also implies that our quantum phase space is a K\"{a}hler  manifold. Finally we have applied our geometric framework to a quantum system with two observables in order to derive a geometric uncertainty relation for quantum assembles.  Our framework can be extended to the infinite dimensional case but this issue needs further investigation. The advantages of the geometric framework is its simplicity and effectiveness. We also believe that the geometric framework can be applied and tested for different quantum systems which also could give rise to very insightful results about  quantum mechanics with many applications in the fields of quantum information, quantum computing, and quantum control.
\begin{flushleft}
\textbf{Acknowledgments:} The author acknowledges useful comments and also discussions with Ole Andresson.
The  author also acknowledges the financial support from the Swedish Research Council (VR).
\end{flushleft}

\end{document}